\documentclass[letterpaper]{article}
\usepackage[preprint]{aaai2027} 
\usepackage[hyphens]{url} 
\usepackage{graphicx} 
\usepackage{natbib} 

\newcommand*{\textcite}{\citet}

\usepackage{caption} 
\usepackage{amsmath}
\usepackage{amssymb}
\usepackage{amsthm}
\usepackage{blkarray}
\usepackage{mathtools}

\usepackage{thmtools} 
\usepackage[capitalise]{cleveref}

\theoremstyle{plain} 
\newtheorem{theorem}{Theorem} 
\newtheorem{proposition}[theorem]{Proposition}

\theoremstyle{definition} 
\newtheorem{definition}[theorem]{Definition}
\newtheorem{example}[theorem]{Example}

\usepackage{todonotes}
\usepackage{xstring}

\newcommand{\pardo}[1]{}

\title{Vertex cover number of valued constraints is a structural parameter \\ for efficient local search}
\author{Artem Kaznatcheev}
\affiliations{
Department of Mathematics, and \\
Department of Information and Computing Sciences,\\
Utrecht University
}

\begin{document}

\maketitle

\begin{abstract}
Many local search methods for problems in artificial intelligence can be viewed as an uphill climb on a corresponding discrete fitness landscapes.
Finding even local peaks in these fitness landscapes is computationally intractable in theory, but often works in practice.
So what features of fitness landscapes allow for efficient local search?
Since any fitness landscapes can be represented by a (hyper)graph of valued constraints, 
I re-frame this as a question of parameterized complexity:  
what structural parameter of a valued constraint graph guarantees that a strict local search will find a local peak in the corresponding fitness landscape efficiently?
Given a valued constrain graph of vertex cover number $k$, I prove that greedy local search will find the a local fitness peak in at most $2^{2k}\cdot(n - k + 1)$ steps and random uphill local search will find a local fitness peak in an expected number of at most $2^k\cdot n(n - k)$ steps.
I also show that these results are asymptotically good for strict local search because there are valued constraint graphs of vertex cover number $k$ where every ascent from some initial assignment has a length of $\frac{9}{128} \cdot 2^k \cdot (n - k)$ or greater.
This suggests vertex cover number as a good structural parameter for the complexity of local search.
\end{abstract}

\section{Introduction}

\pardo{Fitness landscapes}

Many problems in artificial intelligence can be re-framed as search or optimization in an appropriately defined combinatorial search space~\cite{AIbook,AISearch2023}.
Such combinatorial optimization problems are often defined over some set $V$ of Boolean variables with the goal of finding an assignment $x^* \in \{0,1\}^V$ that maximizes some pseudo-Boolean objective function $f: \{0,1\}^V \rightarrow \mathbb{Z}$~\cite{pseudoBool,QPB}.
I will refer to these $f$ as \emph{fitness functions}.
In many cases, these fitness functions have a structure such that `similar' assignments tend to yield similar fitness.
This makes it natural to introduce an adjacency structure based on just assignments themselves, independent of the particular fitness function.
The simplest such structure is to call two assignments \emph{adjacent} if they differ at a single bit.
Or more precisely: I will say that $x$ and $y$ are adjacent if there $\exists u \in V$ such that $x_u \neq y_u$ and $\forall v \in V \backslash \{u\}, \; x_v = y_v$.
An assignment $x^*$ is a \emph{local peak} if for all adjacent assignments $y$, $f(x) \geq f(y)$, and a \emph{global} peak if this inequality is satisfied for all assignments $y \in \{0,1\}^V$.
A \emph{fitness landscape} is a pseudo-Boolean function together with this assignment adjacency structure~\cite{W32,evoPLS,KazThesis,repCP}.

It is well known that finding global peaks in fitness landscapes is NP-hard, but even finding local peaks is difficult.
\textcite{PLS} introduced the complexity class of polynomial-local search (PLS) to encode the problem of finding \emph{any} local peak, not just the global one.
Fitness landscapes are PLS-complete even if we restrict to $f$ that are at most quadratic~\cite{W2SAT_PLS1,W2SAT_PLS2,PLS_VCSP2013,PLS_Survey}.
Thus, unless FP = PLS (which is something that most researchers do not believe to be likely), finding \emph{any} local peak in a general fitness landscapes is computationally intractable for any polynomial-time algorithm.

\pardo{From local search to strict local search}

Since we cannot hope for a polynomial time algorithm to solve arbitrary combinatorial optimization problems,
researchers and practitioners in AI often turn to local search heuristics~\cite{LocalSearch_Book1,LocalSearch_Book2,AIbook}.
Local search starts with some initial assignment $x^0 \in \{0,1\}^V$ and follows some update rule to proceed to another assignment until it eventually finds a good one.
I will call such a search a \emph{strict local search} if each step of the algorithm produces an adjacent assignment of higher fitness.
The sequence of such assignments $x^0,x^1,\ldots,x^T$ is an \emph{ascent}: each $x^{t + 1}$ is adjacent to $x^t$, $f(x^{t + 1}) > f(x^t)$, and $x^T$ is a local peak.

\emph{Greedy} and \emph{random uphill} are two particularly popular local search algorithms~\cite{LocalSearch_Book1,approxBook,AIbook} that follow \emph{steepest} and \emph{random} ascents, respectively, in the fitness landscape~\cite{pw2MSc,effective-and-efficient}.
Given a current assignment $x^t$, let $X^t = \{x' \; | \; f(x') > f(x^t) \text{ and } x' \text{ is adjacent to } x\}$ be the set of adjacent assignments of higher fitness.
Greedy local search algorithm selects an adjacent assignment $x^{t + 1} = \text{argmax}_{x' \in X} f(x')$ of $x^t$ such that all other assignments $x'$ adjacent to $x^t$  have $f(x') \leq f(x^{t + 1})$.
I call an ascent followed by this algorithm a \emph{steepest ascent}.
A lot of work has been done on constructing simple fitness landscapes with provably long steepest ascents~\cite{fittestHard2,hakenSteepest,fittestHard3,tw7,pw2MSc,slow-greed}.
Random uphill local search can be seen as a stochastic relaxation of greedy that selects an adjacent $x^{t + 1}$ uniformly at random from $X^t$ -- this local search is also known as Random-Edge in the simplex literature~\cite{randomFitter1,randomFitter2,effective-and-efficient}.
So if I want to show that greedy and random uphill local search are \emph{efficient} at finding a local peak, I need to upper bound the length of the steepest ascent or the expected length of the random ascents.
And if I want to show that both of these local search algorithms, as well as all other strict local search algorithms, are \emph{inefficient} then I can do this by lower bounding the length of \emph{all ascents} from some initial assignment.

In this paper, I will use parameterized complexity to study when greedy and random uphill local search are efficient.
I will do this in three steps.
First, I will review how to represent fitness landscapes by instances of valued constraint satisfaction problems (VCSPs)~\cite{repCP}.
Specifically, I will focus on fitness landscapes represented by VCSP on $n$ variables with a constraint (hyper)graph that has vertex cover number $k$.
Second, I will provide bounds on steepest and random ascent, upper bounding the former by $\leq 2^{2k}\cdot(n - k + 1)$ (\cref{thm:steepest_ascent}) and the latter by $\leq 2^k\cdot n(n - k)$ (\cref{thm:random_ascent}).
This will allow me to prove that greedy local search solves VCSPs in fixed-parameter linear time and random uphill local search solves VCSPs in fixed-parameter quadratic time when parameterized by vertex cover number.
As far as I know, these are the first parameterized complexity tractability results for strict local search algorithms.
Third, I will show that these bounds are not loose by showing how to construct a VCSP where all ascents (thus including steepest and random) have length $\geq \frac{9}{128}\cdot2^k\cdot(n - k)$ from some initial assignment (\cref{ex:shortest_ascent}).

\section{Representing fitness landscapes \\ by valued constraint hypergraphs}

\pardo{Definition of pseudo-Boolean functions to VCSPs}
Any pseudo-Boolean function $\mathcal{C}: \{0,1\}^V \rightarrow \mathbb{Z}$ can be uniquely written down as a polynomial:
\begin{equation}
\mathcal{C}(x) = \sum_{S \in \mathcal{S}}w(S)\prod_{u\in S}x_u
\label{eq:gen_poly}
\end{equation}
where $\mathcal{S} \subseteq 2^V$ is a \emph{set of scopes},
$w: \mathcal{S} \rightarrow \mathbb{Z} \setminus \{0\}$ is the \emph{constraint weight function},
and each \emph{scope} $S \in \mathcal{S}$ is a set of variables $x_u$ with $u \in S$ that are involved in an AND-constraint of weight $w(S)$.
In this representation, (locally) maximizing $f$ is known as a Boolean \emph{valued constraint satisfaction problem (VCSP)}
with a specific $\mathcal{C} = (V,\mathcal{S},w)$ as the VCSP-instance.
Given a specific constraint with scope $S \in \mathcal{S}$, since we are focused on maximization, if $w(S) > 0$ then I will say $S$ is \emph{satisfied} when $\forall u \in S, x_u = 1$ and if $w(S) < 0$ then I will say $S$ is \emph{satisfied} when at least one $u \in S$ has $x_u = 0$ -- otherwise I will say the constraint is \emph{unsatisfied}.
From this perspective, every fitness landscape has a unique corresponding VCSP-instance that represents it~\cite{repCP}.

\pardo{Move to constraint (hyper)graphs: Discussion of prior parameters like treewidth being insufficient and connection to non-local algorithms}
To get insight into the structural complexity of fitness landscapes via the VCSP-instances that represent them, I will also read $\mathcal{C} = (V,\mathcal{S},w)$ as a weighted (hyper)graph where $V$ are the vertexes, $\mathcal{S}$ is the set of hyperedges, and $w$ is the weight function on hyperedges.
Many NP-hard search problems become tractable by dynamic programming on tree decompositions when the problem instances have small treewidth~\cite{treewidth1988,treewidth2007,treewidth2008,treewidth2011,hypertree2016}.
This is also true for valued constraint satisfaction -- VCSPs are fixed-parameter tractable when parameterized by treewidth~\cite{BB73,VCSPsurvey}.
Unfortunately, this does not carry over from dynamic programming to strict local search.
There are valued constraint graphs of bounded treewidth (in fact, of pathwidth $\leq 3$) where all ascents from some initial assignment are exponentially long~\cite{MaxCutDeg4,MaxCutSimple,VCSPpw3_allExp}.
Thus, treewidth is too coarse grained a structural parameter for efficient local search.
Here, I show that vertex cover number is a better parameter.

\pardo{Definition of induced subproblem}
For a more natural definition of vertex cover number for VCSP and the fitness landscapes they represent, it is helpful to first introduce the idea of an induced subproblem from \textcite{star_VCSPs} and \textcite{VCSPpw3_allExp}:

\begin{definition}[Induced subproblem]
Given a VCSP-instance $\mathcal{C} = (V,\mathcal{S},w)$, any subset $V' \subset V$, and background assignment $z \in \{0,1\}^{V \setminus V'}$,
let $\mathcal{S}' = \{S \cap V' | S \in \mathcal{S}\}$ and $W': \mathcal{S}' \rightarrow \mathbb{Z}$ given by 
\begin{equation}
w'(R)= \sum_{Q \in \mathcal{N}(R,V\setminus V')} w(R \cup Q) \prod_{i \in Q} y_i
\label{eq:subW}
\end{equation}
where $\mathcal{N}(R,V\setminus V') = \{Q \; | \; Q \in V \setminus V' \text{ and } R \cup Q \in \mathcal{S}\}$ is the set of (hyper-edge completion) neighbors of $R$ in $V\setminus V'$,
then $\mathcal{C}' = (V',\mathcal{S'},W')$ is the \emph{induced subproblem} of $\mathcal{C}$ on $V'$ in background $z$.
\end{definition}
\noindent We can think of a subproblem on $V'$ in background $z$ as restricting the scope of all constraints down to $V'$ by fixing the partial assignment to $z$ on the variables being removed.
From the perspective of the fitness landscape implemented by the VCSP, a VCSP-subproblem select a sublandscape or `face' of the fitness landscape by fixing the dimensions in $V \setminus V'$ to the partial assignment $z$:
\begin{proposition}[\textcite{VCSPpw3_allExp}]
If $\mathcal{C}' = (V',\mathcal{S'},W')$ is an induced subproblem of $\mathcal{C} = (V,\mathcal{S},W)$ on $V'$ in background $z$, we have $\mathcal{C'}(y) = \mathcal{C}(yz)$ for all $y \in \{0,1\}^{V'}$.
\end{proposition}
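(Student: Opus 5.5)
The identity follows by expanding $\mathcal{C}(yz)$ with the polynomial representation \eqref{eq:gen_poly} and regrouping the monomials by their restriction to $V'$. I read the definition with $y_i$ in \eqref{eq:subW} standing for the background value $z_i$, and with $\mathcal{N}(R,V\setminus V')$ consisting of subsets $Q \subseteq V\setminus V'$. For $x = yz$ I write $x_u = y_u$ when $u \in V'$ and $x_u = z_u$ when $u \in V\setminus V'$.

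The first step is to split each scope. Every $S \in \mathcal{S}$ decomposes uniquely as $S = R \cup Q$ with $R = S\cap V' \subseteq V'$ and $Q = S\setminus V' \subseteq V\setminus V'$, and since $R$ and $Q$ are disjoint the monomial factors as
\[
\prod_{u\in S} x_u = \Big(\prod_{u \in R} y_u\Big)\Big(\prod_{i\in Q} z_i\Big).
\]
The map $S \mapsto (R,Q)$ is a bijection between $\mathcal{S}$ and the set of pairs $(R,Q)$ with $R \in \mathcal{S}' = \{S\cap V' \mid S \in \mathcal{S}\}$ and $Q \in \mathcal{N}(R,V\setminus V')$. Injectivity holds because $S = R\cup Q$. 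Surjectivity holds by the definition of $\mathcal{N}$, since it requires $R \cup Q \in \mathcal{S}$ and, for disjoint $R$ and $Q$, $(R\cup Q)\cap V' = R$.

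The second step is to reindex the sum over $\mathcal{S}$ as a double sum over $R \in \mathcal{S}'$ and then $Q \in \mathcal{N}(R,V\setminus V')$, and pull the factor $\prod_{u\in R} y_u$ out of the inner sum:
\[
\mathcal{C}(yz) = \sum_{R\in\mathcal{S}'}\Big(\prod_{u\in R} y_u\Big)\sum_{Q \in \mathcal{N}(R,V\setminus V')} w(R\cup Q)\prod_{i\in Q} z_i .
\]
The inner sum is exactly $w'(R)$ from \eqref{eq:subW}, so the right-hand side equals $\mathcal{C}'(y)$ evaluated by \eqref{eq:gen_poly} on $(V',\mathcal{S}',w')$.

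The main obstacle is bookkeeping rather than mathematics. Some $w'(R)$ may vanish, which would conflict with the requirement that weights be nonzero. Those terms contribute zero, though, so the value of $\mathcal{C}'(y)$ is unchanged if such $R$ are dropped from $\mathcal{S}'$. The case $R = \emptyset$ gives a constant term; it is handled by the same formula with the empty product equal to $1$. I would state both conventions explicitly and then conclude.
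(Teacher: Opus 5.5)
Your proof is correct. You also repaired the two typos in the definition correctly: the factor $y_i$ in the formula for $w'(R)$ must be the background value $z_i$, and $\mathcal{N}(R,V\setminus V')$ must range over subsets $Q \subseteq V\setminus V'$. The paper gives no proof of its own. It imports the proposition from prior work and only remarks informally that an induced subproblem restricts every scope to $V'$ while fixing the removed variables to $z$. Your bijection $S \mapsto (S\cap V',\, S\setminus V')$, followed by regrouping the monomials, is exactly the direct verification of that remark, and you handle the zero-weight and empty-scope conventions properly.
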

From the perspective of the hypergraph of constraints, an induced subproblem on $V'$ eliminates the vertexes in $V \setminus V'$, any hyperdge involving only those vertexes, and restricting the hyperedges that cross the cut from $V'$  to $V \setminus V'$ down to just $V'$.
If the result leaves only unary (and nullary) constraints (i.e., at most `self-loops' in the hypergraph) then this captures the classic idea of $V \setminus V'$ being a vertex cover

\pardo{Definition of vertex cover number}
\begin{definition}(Vertex cover number; \textcite{star_VCSPs})
A VCSP $\mathcal{C} = (V,\mathcal{S},w)$ has \emph{vertex cover number} $k$ (writen as $\text{vertexcover\#}(\mathcal{C}) = k$) if there exist $k$ variables $K \subseteq V$ such that the induced subproblem $\mathcal{C}^z$ of $\mathcal{C}$ on $V \setminus K$ in any background $z \in \{0,1\}^K$ has at most unary constraints.
The set of variables $K$ is known as the \emph{vertex cover}.
\end{definition}

Vertex cover number is a more fine-grained parameter than treewidth and pathwidth: 
$\text{treewidth}(\mathcal{C}) \leq \text{pathwidth}(\mathcal{C}) \leq \text{vertexcover\#}(\mathcal{C})$. 
A star-structured VCSP has vertex cover number $1$.
We can think of a VCSP with vertex cover number $k$ as a starlike graph that allows the centre to consist of $k$ variables while still restricting the other $n - k$ `leaf' variables to form an independent set.

\pardo{Specifying VCSPs with large independent set in terms of induced subproblems}
For a VCSP-instance $\mathcal{C} = (K \cup M,\mathcal{S},w)$ with a vertex cover $K$, we know that it implements a pseudo-Boolean function of degree at most $|K| + 1$ and has $|\mathcal{S}| \leq 2^{|K|}\cdot(|M|+1)$.
But instead of specifying the weight of all these constraints for the high degree polynomial in \cref{eq:gen_poly}, we can equivalently specify it as a collection of $2^{|K|}$ linear functions $\mathcal{C}^z$ for each $z \in \{0,1\}^K$:
\begin{equation}
\mathcal{C}^z(y) = w^z(\emptyset) + \sum_{u \in M}w^z(\{u\})
\end{equation}
which come together as the general pseudo-Boolean function:
\begin{equation}
\mathcal{C}(x) = \sum_{z \in \{0,1\}^K} \delta(x[K] = z) \cdot \mathcal{C}^z(x[V \setminus K])
\end{equation}
where the delta function $\delta(x[K] = z)$ is the polynomial:
\begin{equation}
\prod_{v \in K}(z_vx_v + (1 - z_v)(1 - x_v)).
\end{equation}
Thus, a VCSP-instance $\mathcal{C} = (K \cup M, \mathcal{S}, w)$ with vertex cover $K$ can be uniquely specified by the upto $2^{|K|}(|M| + 1)$ weights $\{w^z(\emptyset) \; | \; z \in \{0,1\}^V\} \cup \{w^z(u) \; | \; u \in M \text{ and } z \in \{0,1\}^K\}$.
I will use this representation of fitness landscapes to prove that steepest ascents have length of at most $2^{2|K|}\cdot(n - |K| + 1)$ (\cref{thm:steepest_ascent}), 
random ascents have an expected length of at most $2^{|K|}\cdot n(n - |K|)$ (\cref{thm:random_ascent}), 
and to build a VCSP-instance where all ascents from $0^{|K|}0^{|M|}$ have length of at least $\frac{9}{128}\cdot 2^{|K|} \cdot (n - |K|)$ (\cref{ex:shortest_ascent}).

\section{Upper bound on steepest and random ascent}

\textcite{star_VCSPs} introduced the vertex cover number parameter for VCSPs and showed that from any initial assignment in any VCSP-instance on $n$ variables with vertex cover number $k$, there exists some ascent of length $\leq 2^k(n - k + 1) - 1$.
But their existence proof does not provide a specific local search algorithm that could actually follow this ascent (at least not one that does not have a small vertex cover in hand).
Here, I will show that greedy local search and random uphill local search also find short ascents
while treating the fitness function as a black-box 
(so without any knowledge of the vertex cover).
I will only use knowledge of the vertex cover for the analysis of the runtimes, not in the running of the algorithm itself.
I will study the behavior of ascents on the fitness landscape $\mathcal{C}$ on $n$ variables with vertex cover number $k$ by decomposing the landscapes into the $2^k$ sublandscapes induced by each partial assignment $z \in \{0,1\}^K$ to the vertex cover $K$ and seeing how the ascents move through each sublandscape.
Specifically, I will show that steepest ascent makes at most $n - k$ steps in each induced sublandscape and random ascent makes an expected number of $\leq n(n - k)$ steps in each induced sublandscape.
This will show that steepest ascent solves VCSPs in fixed-parameter linear time and random ascent solves VCSPs in fixed-parameter quadratic time when parameterized by vertex cover number.

\begin{theorem}
If a Boolean VCSP-instance $\mathcal{C}$ has vertex cover number $k$ then the steepest ascent is $\leq 2^{2k}\cdot(n - k + 1)$
\label{thm:steepest_ascent}
\end{theorem}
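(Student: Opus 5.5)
The plan is to split the steepest ascent $x^0,x^1,\ldots,x^T$ according to the vertex cover. Fix a vertex cover $K$ with $|K|=k$ and let $M=V\setminus K$, so $|M|=n-k$. Write each assignment as $x=zy$ with $z=x[K]$ and $y=x[M]$.

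By the Proposition of \textcite{VCSPpw3_allExp} and the definition of vertex cover number, inside the face $z$ the fitness is the linear function $\mathcal{C}^z(y)=w^z(\emptyset)+\sum_{u\in M}w^z(\{u\})y_u$. Each step either flips a variable of $M$ (an $M$-step) or a variable of $K$ (a $K$-step). I will cut the ascent into maximal \emph{runs}: each run is a maximal block of consecutive $M$-steps inside a fixed face $z$, together with the $K$-step that ends it. The bound $2^{2k}\cdot(n-k+1)$ then splits as (number of runs) $\times$ (length of a run), and I aim to show at most $2^{2k}$ runs, each of length at most $n-k+1$.

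\textbf{Run length.} Inside face $z$, flipping $u\in M$ changes fitness by $\pm w^z(\{u\})$, independently of the other leaves. Hence once $u$ is flipped to its preferred value under $z$, flipping it back within the same run would decrease fitness, and any strict local search flips each leaf at most once per run. So a run has at most $n-k$ $M$-steps plus one $K$-step. This part holds for any ascent, not only the steepest.

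\textbf{Number of runs.} This is the main obstacle and the only place steepness is used. I would charge each run to the ordered pair $(z,z')$, where $z$ is the face of the run and $z'$ is the face entered by its terminating $K$-step; there are at most $2^k\cdot 2^k=2^{2k}$ such pairs. It would then suffice to show that each ordered pair is charged at most once.

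To attack this, I would use the following facts.
\begin{enumerate}
\item Steepest ascent inside face $z$ flips the improvable leaves in decreasing order of $|w^z(\{u\})|$.
\item The $K$-step from $z$ to $z'$ (differing in a variable $v$) has gain $\Delta_{z\to z'}(y)=\mathcal{C}^{z'}(y)-\mathcal{C}^z(y)$. This gain is linear in $y$ with coefficients $w^{z'}(\{u\})-w^z(\{u\})$.
\item When that step is taken, this gain is at least every remaining leaf gain in $z$.
\end{enumerate}
The idea is to show that if the ascent later leaves $z$ toward $z'$ again, then the fitness level and the set of leaves already "settled" under $z$ contradict either the maximality of the earlier choice or strict monotonicity of fitness.

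If a clean "once per pair" statement fails, my fallback is a weaker but sufficient version. I would show that for each ordered pair $(z,z')$, the total number of $M$-steps over all runs charged to that pair is at most $n-k$. The argument would be that each leaf is flipped in $z$ at most once across all such runs, because the steepest-ascent threshold $\Delta_{z\to z'}$ only increases along the ascent. Counting one $K$-step per pair on top of this still yields at most $2^{2k}\cdot(n-k+1)$ steps in total, which is the claimed bound.
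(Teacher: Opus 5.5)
There is a genuine gap. The number of runs is not bounded by any function of $k$. So neither ``each ordered pair is charged at most once'' nor the fallback's ``one $K$-step per pair'' can be proved.

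Already for $k=1$ (cover $\{v\}$, leaves $y_1,\dots,y_4$) take $\mathcal{C}^0(y)=39y_1+39y_2+2y_3+y_4$ and $\mathcal{C}^1(y)=40+12y_1+11y_2+10y_3+10y_4$. From $v=0$, $y=0000$, the steepest ascent (every choice a strict maximum) flips $v,y_1,y_2,v,y_3,y_4,v$. Its fitnesses are $0,40,52,63,78,80,81,83$, and it stops at a peak. The pair $(0,1)$ is charged twice. The departure gain $\Delta_{0\to1}$ is $40$ the first time and $2$ the second, so your claim that this threshold only increases also fails.

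The pattern iterates. Each bounce flips two fresh leaves that are light in the current face but weigh just under the previous departure gain in the other face, with all scales shrinking by a constant factor per bounce. This makes $v$ flip about $n/2$ times. Your facts 1--3 are true, but they only constrain a single visit to a face and cannot rule out returns. Your per-run bound of $n-k$ leaf flips is correct, but it is at the wrong granularity.

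The missing idea is the paper's accounting, which runs the other way. First, bound the $M$-steps made under a background $z$ over the \emph{whole} ascent, not per run. Order $|w^z_1|\ge\dots\ge|w^z_m|$ and show by induction that after $j$ such steps $\mathcal{C}^z(y)\ge\text{min}^z+\sum_{i\le j}|w^z_i|$. The induction survives excursions out of $z$ because fitness strictly increases between leaving $z$ and re-entering it. At the $(j+1)$-st step, either the bound already holds, or one of the $j+1$ heaviest unary constraints is unsatisfied; steepness then forces a gain of at least $|w^z_{j+1}|$. This gives at most $m=n-k$ leaf flips per background, and $2^k m$ in total.

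Second, bound the $K$-steps. Between two consecutive leaf flips $y$ is fixed and no assignment can repeat, so at most $2^k-1$ consecutive $K$-steps occur there. The total is therefore at most $2^k m+(2^k m+1)(2^k-1)\le 2^{2k}(n-k+1)$. The $2^{2k}$ factor comes from $K$-steps interleaved with up to $2^k m$ leaf flips, not from a bounded number of runs.
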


\begin{proof}
Let $K \subseteq V$ be a vertex cover of size $k$ and let $m = n - k$ be the number of other variables.
For each $z \in \{0,1\}^K$, define $\mathcal{C}^z$ as the induced subproblem on $V \setminus K$ with background $z$.
Since $K$ is a vertex cover, it means there are only unary and nullary constraints in each $\mathcal{C}^z$, number the unary constraints for each $\mathcal{C}^z$ based on their magnitude $|w^z_1| \geq |w^z_2| \geq \ldots \geq |w^z_m|$~\footnote{
note $w^z_1$ and $w^{z'}_1$ might act on different variables, that is why I don't use the $w^z(\{1\})$ notation here.
Formally: for each $z$ there is a permutation $\sigma^z$ such that $w^z_i = w^z(\{\sigma^z(i)\})$.
} and define the partial sums $w^z_{\leq j} = \sum_{i = 1}^j |w^z_i|$
From this, we can define the lowest fitness in $\mathcal{C}^z$ -- for an assignment of $1$ for each variable with negative weight, and $0$ for positive weighted variables -- as:
\begin{equation}
\text{min}^z = w^z(\emptyset) + \sum_{\substack{i \in [m] \\ \text{ s.t. } w^z_i < 0}} w^z_i
\end{equation}
and highest fitness assignments (assign $1$ for each variable with positive weight, and $0$ for negative weighted variables) in $\mathcal{C}^z$ as $\text{min}^z + w^z_{\leq m}$.
Now we can show that:

\textbf{Claim:} Along any steepest ascent where $j$ steps have happened in $V \setminus K$ with background $z$ if the current assignment is $x = yz$ then $\mathcal{C}^z(y) \geq \text{min}^z + w^z_{\leq j}$. 

We prove this claim by induction.
It is true for $j = 0$ because $\text{min}^z$ is the minimum fitness in this sublandscape.
Now consider the case of $j + 1$.
Let $y'$ be the assignment right before the $j + 1$th flip in $\mathcal{C}^z$ and $y$ be the assignment after.
If $\mathcal{C}^z(y') \geq \text{min}^z + w^z_{\leq j + 1}$ then we are done.
Otherwise, $w^z_{\leq j + 1} > \mathcal{C}^z(y') - \text{min}^z \geq  w^z_{\leq j}$ where the second inequality is from the inductive hypothesis.
From this, it follows that at least one of the $j + 1$ largest constraints by magnitude is unsatisfied.
Since, steepest ascent flips the largest available increase, it will satisfy the largest unsatisfied constraint and so increase fitness by $\geq |w_{j + 1}|$ thus resulting in $\mathcal{C}^z(y) \geq |w_{j + 1}| + w_{\leq j} + \text{min}^z = \text{min}^z + w_{\leq j + 1}$.

This claim implies that there are at most $m$ steps in each $\mathcal{C}^z$.
This means there are at most $2^k\cdot m$ many flips of variables in $M$, so the set $Y \subseteq \{0,1\}^M$ that actually occur in the ascent must have size $|Y| \leq 2^k\cdot m + 1$.
For a given $y \in Y \subseteq \{0,1\}^M$, each $z \in \{0,1\}^K$ can be visited at most once without violating the increasing fitness property of an ascent.
This gives us an overall bound on the length of a steepest ascent of:
\begin{align}
\leq \underbrace{|Y| - 1}_{\text{flips in } M} + \underbrace{|Y|(2^k - 1)}_{\text{flips in } K} & = 2^k(2^k\cdot m + 1) - 1 \\
& \leq 2^{2k}(n - k + 1)
\end{align}
\end{proof}

A similar fixed parameter tractability theorem with similar proof holds for random uphill local search:

\begin{theorem}
If a Boolean VCSP-instance $\mathcal{C}$ has vertex cover number $k$ then a random ascent has expected length $\leq 2^{k}\cdot n(n - k)$.
\label{thm:random_ascent}
\end{theorem}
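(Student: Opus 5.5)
The plan is to mirror the proof of \cref{thm:steepest_ascent}. Fix a vertex cover $K$ with $|K| = k$ and let $M = V \setminus K$, $m = n - k$. Write each assignment along the ascent as $x^t = y^t z^t$ with $z^t \in \{0,1\}^K$ and $y^t \in \{0,1\}^M$. I charge every step of the random ascent to the background $z^t$ it departs from, whether it flips a variable of $M$ or of $K$. The total length is then $\sum_{z} T_z$, where $T_z$ counts the steps taken while the background equals $z$. It therefore suffices to show $\mathbb{E}[T_z] \leq n\cdot m$ (up to the bookkeeping issue discussed below) and sum over the $2^k$ backgrounds.

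The first ingredient is a monotonicity observation for a fixed $z$. Suppose the ascent is in background $z$ at times $t_1 < t_2 < \cdots$. Since $\mathcal{C}(y^{t}z) = \mathcal{C}^z(y^t)$ and fitness strictly increases along an ascent, the values $\mathcal{C}^z(y^{t_1}) < \mathcal{C}^z(y^{t_2}) < \cdots$ increase strictly, even across excursions to other backgrounds. With the sorting $|w^z_1| \geq \cdots \geq |w^z_m|$ and partial sums $w^z_{\leq j}$ from the previous proof, define the \emph{level} of $z$ at time $t_i$ as the largest $j$ such that $\mathcal{C}^z(y^{t_i}) \geq \text{min}^z + w^z_{\leq j}$. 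By the monotonicity observation, the level is non-decreasing in $i$ and takes values in $\{0,\ldots,m\}$.

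The key step is the following. Take any time the ascent sits in background $z$ at level $j < m$. By the same argument as in the steepest-ascent claim, one of the $j+1$ largest-magnitude unary constraints of $\mathcal{C}^z$ is unsatisfied. Flipping that variable is an improving move whose gain is at least $|w^z_{j+1}|$, and this pushes the level to at least $j+1$. Random uphill search picks uniformly from $X^t$, and $|X^t| \leq n$, so this particular move is chosen with probability at least $1/n$, independently of the past given the current state. Hence each step charged to $z$ while the level is below $m$ raises the level with probability at least $1/n$. Consequently the number of steps charged to $z$ at each level $j<m$ is stochastically dominated by a geometric variable with mean $n$. Summing over the $m$ levels gives an expected contribution of at most $n\cdot m$.

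The main obstacle I expect is the level $m$ bookkeeping. At level $m$ the point $y$ is optimal in $\mathcal{C}^z$, so no $M$-flip is available there. Moreover, by strict monotonicity background $z$ can be visited at most once more, which yields at most one extra (necessarily $K$-)step. This naively gives $\mathbb{E}[T_z] \leq nm + 1$. To land exactly on $2^k\cdot n(n-k)$ I would first try to absorb this extra step. One option is to note that a step leaving $z$ while at level $j<m$ has already been counted among the geometric trials. Another is to sharpen the per-level bound: the improving set at level $j$ contains at least one $K$-flip in addition to the good $M$-flip whenever the search is about to leave, so some trial slack can be recovered. Failing that, I would accept the slightly weaker $2^k(n(n-k)+1)$ and check whether the final step from the terminal local peak (which takes no step at all) supplies the missing unit.
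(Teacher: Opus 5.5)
Your argument is the paper's argument. The paper also fixes a vertex cover $K$ and treats the thresholds $\text{min}^z + w^z_{\le j}$ as levels in each background. It asserts at most $n$ expected flips per level, counting $K$-flips out of $z$ among the trials, because $|X^t|\le n$ and one available flip raises the level. It then multiplies by $m$ levels and $2^k$ backgrounds. Your monotonicity observation and geometric domination make this more careful.

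The extra step you flag is a genuine hole, and the paper's proof does not resolve it either. It writes $n\cdot m$ per background and never counts the $K$-flip that leaves $z$ once $y$ has reached level $m$. So, as written, your proposal only yields $2^k\bigl(n(n-k)+1\bigr)$, and so does the paper's proof.

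None of your candidate repairs closes this.
\begin{itemize}
\item The first fails because the exit step is taken at level $m$, outside every geometric trial at levels $<m$.
\item The third saves one step overall (the background where the ascent ends), not one per background.
\item The second, as phrased, points the wrong way: additional improving $K$-flips enlarge $X^t$ and lower the success probability.
\end{itemize}

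The sharpening that works is an upper bound on $|X^t|$. At level $j$, the satisfied unary constraints of $\mathcal{C}^z$ have total magnitude at least $w^z_{\le j}$, while any $s$ of them have total magnitude at most $w^z_{\le s}$. So at least $j$ of them are satisfied. Hence there are at most $m-j$ improving $M$-flips and at most $k$ improving $K$-flips, so $|X^t|\le n-j$.

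This gives $\mathbb{E}[T_z]\le\sum_{j=0}^{m-1}(n-j)+1=nm-\binom{m}{2}+1\le nm$ for $m\ge 2$. For $m=1$ the claimed bound $2^k(k+1)\ge 2^{k+1}-1$ holds trivially, because an ascent never revisits an assignment.

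Some use of $m\ge 1$ is unavoidable. For $k=n$ the claimed bound is $0$, which is false, so the statement must be read with $k$ the minimum cover size. Any $n-1$ variables form a vertex cover, so under that reading $m\ge 1$.
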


\begin{proof}
As with the proof of \cref{thm:steepest_ascent}, define $K$, $m$, $\mathcal{C}^z$, $w^z_i$, $w^z_{\leq j}$, and $\text{min}^z$.
Suppose our random ascent has just flipped a bit to reach a $yz$ with $w^z_{\leq j + 1} > \mathcal{C}^z(y) - \text{min}^z \geq w^z_{\leq j}$.
At this point, there is at least one unsatisfied constraint in $\mathcal{C}^z$ with weight $\geq w_{j + 1}$, as with the steepest ascent case, flipping the variable associated with such an unsatisfied constraint will take the fitness in $\mathcal{C}^z$ above $w^z_{\leq j + 1} + \text{min}^z$.
Since there are at most $n - 1$ other variables that might be flippable while increase fitness, this means we expect to flip $\leq n$ to get $\mathcal{C}^z$ to increase to the next level (this might end up taking several returns to $\mathcal{C}^z$ when variables in $K$ are flipped).
Since there are $m = n - k$ levels of $w^z_{\leq j}$s, this means an expected total number of flips of $n(n - k)$ per background $z$.
Combine this with with the number of assignments for $z \in \{0,1\}^K$ to get an upper bound on the expected number of steps of $\leq 2^k\cdot n (n - k)$.
\end{proof}

Of course, a more careful analysis of the arguments presented in the above two proofs could be undertaken to reduce some of the upper bounds slightly, 
but this will not lead to a drastic qualitative reduction of the bounds.

\section{Lower bound on shortest ascent}

To show that the bounds in \cref{thm:steepest_ascent,thm:random_ascent} are not loose, I will give an example of a VCSP on $n$ variables of vertex cover number $k$ where all ascents from $0^n$ have length at least $\frac{9}{128}\cdot2^k\cdot (n - k)$ (\cref{ex:shortest_ascent}).
This will also establish that \textcite{star_VCSPs}'s prior upper bound of $\leq 2^k\cdot(n - k + 1) - 1$ on the shortest ascent parameterized by vertex cover number is tight.
My main trick in \cref{ex:shortest_ascent} is to use the variables in the vertex cover to place an existing construction of all ascents long (of length $2^{\Omega(k)}$) but adjust the fitness function in such a way that every second step of that existing long ascent is not uphill unless all the variables in the independent set are flipped.
I could place any existing construction of all ascents long (for example, I could use any one of \textcite{MaxCutDeg4}, \textcite{MaxCutSimple}, or \textcite{VCSPpw3_allExp}) in the vertex cover, but I can get the simplest and tightest result by using a snake-in-the-box code~\cite{snakeLowerBound}.
\pardo{Intro for how to lower bound steepest and random ascent by making all ascents long}

\pardo{Main trick is to put an existing all ascents long construction at the centre and have it wiggle the leaves.
Existing constructions tend to have just one ascent that any strict local search algorithm follows.}

\pardo{A single ascent can be abstracted into a snake in the box codes.}
A \emph{snake} on $k$ bits is a sequence of $T$ assignments $\{x^1,\ldots,x^T\} \subseteq \{0,1\}^k$ such that for all $i,j \in [T]$, $x^i$ is adjacent to $x^j$ if and only if $j = i \pm 1$. 
In other words, each assignment in the snake is adjacent to exactly two other assignments in the snake, 
except for the head ($x^1$) and the tail ($x^T$) that each have only one adjacent assignment in the snake. 
We can always assume that the head $x^1 = 0^k$.
Let $\text{sn}: \{0,1\}^k \rightarrow [T] \cup \{\bot\}$ be a function that maps each assignment to its position in the snake, or to $\bot$ if it is not in the snake:
\begin{equation}
\text{sn}(x) = \begin{cases}
t & \text{if } x = x^t \\
\bot & \text{if } x \not\in \{x^1,\ldots x^T\}
\end{cases}
\end{equation}
\textcite{snakeLowerBound} showed the there exists snakes on $k$-bits with length $T$ lower bounded by $\frac{9}{64}2^k$.

\pardo{Construction of longest ascent}
\begin{example}\label{ex:shortest_ascent}
Consider a VCSP on $n = k + m$ variables $K \cup M$ with $|K| = k$ and $|M| = m$.
Let $\text{sn}: \{0,1\}^K \rightarrow [T] \cup \{\bot\}$ be a snake on $k$-bits with assignments $X = \{x^1,x^2,\ldots,x^T\}$ with $x^1 = 0^k$.
For each $x \in \{0,1\}^K$, define the induced constraint weights $w_{\text{sn}(x)}$:
\begin{align*}
w_{\bot}(\emptyset) \quad\; & = - 1 & \; & w_{\bot}(\{i\}) \quad\; = -1 \\
w_{4j+1}(\emptyset) & = k(2m + 4) & \; & w_{4j+1}(\{i\})  = \;\;\; 1 \\
w_{4j+2}(\emptyset) & = k(2m + 4) + 1 - 2m & \; & w_{4j+2}(\{i\})  = \;\;\; 3 \\
w_{4j+3}(\emptyset) & = k(2m + 4) + 2m + 2 & \; & w_{4j+3}(\{i\})  = -1 \\
w_{4j+4}(\emptyset) & = k(2m + 4) + 2m + 3 & \; & w_{4j+4}(\{i\})  = -3
\end{align*}
where $w_{\text{sn}(x)}(\{i\})$ is the same for each $i \in M$.
The important feature to notice is that for all $y \neq 1^m$: $\mathcal{C}^{4j + 1}(y) > \mathcal{C}^{4j + 1}(y)$ but $\mathcal{C}^{4j + 1}(1^m) < \mathcal{C}^{4j + 1}(1^m)$.
And similarly for all $y \neq 0^m$: $\mathcal{C}^{4j + 3}(y) > \mathcal{C}^{4j + 4}(y)$ but $\mathcal{C}^{4j + 3}(0^m) < \mathcal{C}^{4j + 4}(0^m)$.
Thus, steps in $K$ along the snake between $x^{4j + 1}$ and $x^{4j + 2}$; and between $x^{4j + 3}$ and $x^{4j + 4}$ cannot be made without making all $m$ steps in $M$ to bring $\mathcal{C}^{4j + 1}$ and $\mathcal{C}^{4j + 3}$, respectively, to their peaks.
This `blocking' by the independent set of variables is the main trick.

Now consider an ascent starting from $0^k0^m$.
This ascent will go through all the $x^t$ in the snake $X$ and flip all the bits in $M$ after every two assignments in the snake.
Specifically, the ascent will have the following structure:
\begin{align}
x^{4j + 1}0^m \xrightarrow[]{\;\; m \text{ flips in } \mathcal{C}^{4j + 1} \;}  & \; x^{4j+1}1^m \label{eq:phase1}\\
x^{4j + 1}1^m \rightarrow x^{4j + 2}1^m \rightarrow & \; x^{4j + 3}1^m \\
x^{4j + 3}1^m \xrightarrow[]{\;\; m \text{ flips in }  \mathcal{C}^{4j + 3} \;} & \; x^{4j + 3}0^m \\
x^{4j + 3}0^m \rightarrow x^{4j + 4}0^m \rightarrow & \; x^{4j + 5}0^m \label{eq:phase4}
\end{align}
which leads to a total ascent length of $T\cdot\frac{m}{2}$.
Combined with the \textcite{snakeLowerBound} bounds on $T$ this results in an ascent of length $\geq \frac{9}{128} \cdot 2^k \cdot m$.
Since no other uphill steps are possible starting from $0^n$ outside of those specified by Equations~\eqref{eq:phase1}-\eqref{eq:phase4}, this gives us the lower bound on the shortest ascents from some initial assignment in a VCSP of vertex cover number $k$.
\end{example}

\pardo{Comments on other constructions for longest ascent}

\section{Conclusion}
Together \cref{thm:steepest_ascent}, \cref{thm:random_ascent}, and \cref{ex:shortest_ascent} tell us that if we want to understand what makes fitness landscapes difficult or easy for local search to navigate then we can start by looking at the vertex cover number of the VCSP-instances that represent these fitness landscapes.
Of course, this is only a first step in this emerging field of parametrized complexity of local search.
In general, it would be interesting to see what other structural parameters of combinatorial optimization problems become relevant as we follow the long ascent to understanding this complexity.
And in the particular cases in AI where local search seems to work efficiently in practice, can we show that these structural parameters are bounded?

\section*{Acknowledgments}
I would like to thank Dave Cohen and Peter Jeavons for helpful discussions and feedback on early drafts of this paper.

\bibliography{VCSP_vcn}

\end{document}